\documentclass[conference,10pt]{IEEEtran}
%
% If IEEEtran.cls has not been installed into the LaTeX system files,
% manually specify the path to it like:
% \documentclass[journal,comsoc]{../sty/IEEEtran}

\usepackage[T1]{fontenc}% optional T1 font encoding

% *** GRAPHICS RELATED PACKAGES ***
%
\ifCLASSINFOpdf
% \usepackage[pdftex]{graphicx}
% declare the path(s) where your graphic files are
% \graphicspath{{../pdf/}{../jpeg/}}
% and their extensions so you won't have to specify these with
% every instance of \includegraphics
% \DeclareGraphicsExtensions{.pdf,.jpeg,.png}

\else
%\usepackage[tight, footnotesize]{subfigure}
% or other class option (dvipsone, dvipdf, if not using dvips). graphicx
% will default to the driver specified in the system graphics.cfg if no
% driver is specified.
% \usepackage[dvips]{graphicx}
% declare the path(s) where your graphic files are
% \graphicspath{{../eps/}}
% and their extensions so you won't have to specify these with
% every instance of \includegraphics
% \DeclareGraphicsExtensions{.eps}
\fi

\usepackage{graphicx}
\usepackage{algorithm}
\usepackage{epstopdf}
\usepackage{footnote}
\usepackage{footmisc}
\usepackage{subfigure}
\usepackage{cuted}
%\usetikzlibrary{shapes,arrows}
%\usepackage{caption}

\graphicspath{{./Figures/}}

\usepackage[cmex10]{amsmath}
\usepackage{amsfonts,latexsym,amssymb,amsthm}
\usepackage{epsfig,graphics,color}
\usepackage{setspace}
\usepackage{multirow}
\usepackage{cite}
\usepackage{algorithm, algpseudocode}
\usepackage{mathrsfs}
\usepackage{bm}
\usepackage{enumitem}
\setlist[itemize]{noitemsep, topsep=0pt}
\usepackage{xcolor}

% *** Do not adjust lengths that control margins, column widths, etc. ***
% *** Do not use packages that alter fonts (such as pslatex).         ***
% There should be no need to do such things with IEEEtran.cls V1.6 and later.
% (Unless specifically asked to do so by the journal or conference you plan
% to submit to, of course. )

%Combined Aliases

\newtheorem{theorem}{Theorem}
\newtheorem{lemma}{Lemma}

\usepackage{soul}

%%algorithm

% correct bad hyphenation here
\hyphenation{op-tical net-works semi-conduc-tor}

\begin{document}
	\setlength{\columnsep}{0.17 in}
	%
	% paper title
	% Titles are generally capitalized except for words such as a, an, and, as,
	% at, but, by, for, in, nor, of, on, or, the, to and up, which are usually
	% not capitalized unless they are the first or last word of the title.
	% Linebreaks \\ can be used within to get better formatting as desired.
	% Do not put math or special symbols in the title.
	\title{Enabling Large-Scale Federated Learning over Wireless Edge Networks}

	\author{
		\IEEEauthorblockN{Thinh~Quang~Dinh\IEEEauthorrefmark{1},  Diep~N.~Nguyen\IEEEauthorrefmark{1}, Dinh Thai Hoang\IEEEauthorrefmark{1}, Pham Tran Vu\IEEEauthorrefmark{2}\IEEEauthorrefmark{3}, and Eryk Dutkiewicz\IEEEauthorrefmark{1}}
		\IEEEauthorblockA{\IEEEauthorrefmark{1}University of Technology Sydney}
		\IEEEauthorblockA{\IEEEauthorrefmark{2}Ho Chi Minh City University of Technology (HCMUT), Vietnam}
		\IEEEauthorblockA{\IEEEauthorrefmark{3}Vietnam National University Ho Chi Minh City, Vietnam
			\\ Email: \{Thinh.Dinh, Diep.Nguyen, Hoang.Dinh, Eryk.Dutkiewicz\}@uts.edu.au, \{ptvu\}@hcmut.edu.vn}
	}
	
	% note the % following the last \IEEEmembership and also \thanks - 
	% these prevent an unwanted space from occurring between the last author name
	% and the end of the author line. i.e., if you had this:
	% 
	% \author{....lastname \thanks{...} \thanks{...} }
	%                     ^------------^------------^----Do not want these spaces!
	%
	% a space would be appended to the last name and could cause every name on that
	% line to be shifted left slightly. This is one of those "LaTeX things". For
	% instance, "\textbf{A} \textbf{B}" will typeset as "A B" not "AB". To get
	% "AB" then you have to do: "\textbf{A}\textbf{B}"
	% \thanks is no different in this regard, so shield the last } of each \thanks
	% that ends a line with a % and do not let a space in before the next \thanks.
	% Spaces after \IEEEmembership other than the last one are OK (and needed) as
	% you are supposed to have spaces between the names. For what it is worth,
	% this is a minor point as most people would not even notice if the said evil
	% space somehow managed to creep in.

	% The paper headers
	\markboth{Journal of \LaTeX\ Class Files,~Vol.~14, No.~8, August~2015}%
	{Dinh \MakeLowercase{\textit{et al.}}: In-Network Computation Architecture for Very Large Scale Federated Learning}

	% If you want to put a publisher's ID mark on the page you can do it like
	% this:
	%\IEEEpubid{0000--0000/00\$00.00~\copyright~2017 IEEE}
	% Remember, if you use this you must call \IEEEpubidadjcol in the second
	% column for its text to clear the IEEEpubid mark.

	% use for special paper notices
	%\IEEEspecialpapernotice{(Invited Paper)}

	% make the title area
	\maketitle
	
	% As a general rule, do not put math, special symbols or citations
	% in the abstract or keywords.
	\begin{abstract}
	Major bottlenecks  {of} large-scale Federated Learning (FL)  {networks} are the  {high costs for} communication and computation.  {This is due to the fact that most of current FL frameworks only consider} a star network topology where  {all local trained models are aggregated} at a single server (e.g., a cloud server).  {This} causes  significant overhead at the server when the number of users are huge and local models'  {sizes} are large.  This paper proposes a novel edge network architecture which decentralizes the  {model aggregation} process at the server, thereby significantly reducing the  aggregation latency of the whole network. In this architecture, we propose a highly-effective in-network computation protocol consisting of two components. First, an in-network aggregation process is designed  so that  {the majority of} aggregation  computations can be offloaded from cloud  server to edge nodes. Second, 
				  a joint routing and resource allocation optimization problem  {is formulated to minimize}  the aggregation latency  {for the whole system at every learning round}.  The problem turns out to be  {NP-hard, and thus we propose} a  {polynomial time} routing algorithm   which can achieve near optimal performance with a theoretical bound.  Numerical results  {show} that  {our proposed framework can dramatically reduce} the network latency, up to $4.6$ times.  {Furthermore, this framework can significantly decrease} cloud's traffic and computing overhead  {by a factor of $K/M$, where $K$ is the number of users and $M$ is the number of edge nodes, in comparison with conventional baselines}.
	\end{abstract}
	\begin{IEEEkeywords}
		Mobile Edge Computing, Federated Learning, In-network Computation
	\end{IEEEkeywords}

	% For peer review papers, you can put extra information on the cover
	% page as needed:
	% \ifCLASSOPTIONpeerreview
	% \begin{center} \bfseries EDICS Category: 3-BBND \end{center}
	% \fi
	%
	% For peerreview papers, this IEEEtran command inserts a page break and
	% creates the second title. It will be ignored for other modes.
	\IEEEpeerreviewmaketitle

	\section{Introduction} \label{sec:intro}
	
	The last decade has witnessed the adoption of machine learning (ML) and artificial intelligence (AI) as the core engines of intelligent systems \cite{SaadNetMag2020}. Under most ML-based frameworks, raw data are collected and trained at centralized cloud  {servers,} raising concerns in user privacy,  latency, and network overhead.  {Federated Learning (FL)} has recently emerged as a potential distributed learning solution to these issues \cite{McMahan2017}. Under FL, mobile users (MUs), instead of sharing their raw data with the server, can build and learn their local learning  {models. After that, they only need to} send these local model parameters to the centralized server \cite{LimCOMST2020}.  {By doing so, the MUs can iteratively} download  the new global model from  the server, update their local models using  their local training data, and then upload their  {new local trained models} to the server for the model aggregation. This  process is  repeated  until   {the global model converges or after a predefined number of learning rounds reaches}. 
	
	However, given its distributed setting, communication and computation  costs are the two major bottlenecks of FL \cite{McMahan2017,LimCOMST2020,TranINFOCOM2019}.  {In addition, due to a huge demands of advanced AI-based mobile applications, learning tasks are more and more complicated with very large data sizes.} For example, with a large model like Visual Geometry Group-16 (VGG-16),  each user needs to update about $500$ TB of data until the global model is converged \cite{SamekTut2020}. Since conventional FL models use star network topologies,  {during the} model aggregation step, the cloud generally needs to connect with a huge number of users.  {In such a case,} aggregation operations at the cloud incur (a) high transmission latency, (b) high traffic overhead and (c) high computational overhead in term of processing and memory resources. To overcome these challenges,  {edge computing (EC) has recently emerged as a great potential solution by ``moving" computing resources} closer to end users \cite{LimCOMST2020}. Since edge nodes possess both computation and communication capacities, edge  {networks} can  decentralize  the model aggregation  computations at the cloud server in very large scale FL networks.  {To that end, it is critical to} develop a distributed in-network aggregation functionality implemented at edge networks'  {components in order to address current challenges of FL}.

	In-network computation (INC) is   a   process   {of gathering, processing data at intermediate nodes then routing the processed data through a multi-hop network} \cite{FasoloWC2007}. INC has been well-studied for distributed data clusters such as MapReduce \cite{DeanMapReduce}, Pregel \cite{GrzegorzPregel} and
	DryadLINQ \cite{YuanDryadLINQ}. Three  basic  components of an in-network computation solution are: suitable networking protocols, effective aggregation functions, and  efficient   {methods for}  data representation \cite{FasoloWC2007}. The early work of Liu \emph{et al.} \cite{LiuICC2020} proposed to aggregate/average  users' models at an edge node that later sends these intermediate model parameters to the cloud server. However, in this work, users are assumed to connect  directly to a single edge node without any alternative  {paths}. In practice,  {due to dense deployment of edge networks} \cite{ChenJSAC2018}, {a} given MU can associate one or another or even with multiple nearby edge nodes. As a result, the problems of network routing and resource allocation {for the model aggregation in FL under EC} become more challenging.

	 {Given the above, this paper proposes} a novel edge network architecture aiming at minimizing the  {aggregation} latency of  {FL processes}. This architecture allows the cloud node  {to decentralize} its  {aggregation process} to the edge nodes. To accomplish that network functionality, we design  an in-network computation protocol which consists of two components: an in-network aggregation process and a network routing algorithm.  {Specifically, the in-network aggregation process guides on how packets are processed at edge nodes and cloud node and how the cloud decentralizes the model aggregation process of FL.}
 Then,  {we formulate the joint routing and resource allocation optimization problem aiming to}  minimize the network's aggregation latency.  {The problem turns out to be NP-hard.} We   {thus} propose an effective algorithm based on randomized rounding techniques, which provably
 achieves  {an approximation guarantee}. Finally, simulation results  {show}  that our  {proposed solutions} significantly reduce not only the network' aggregation latency but also  {the} cloud node's traffic and computing overhead. 
%
%	
%		The rest of the paper is organized as follows. The system
%		model is introduced in Section \ref{section:sys_model}. We then design the in-network aggregation process in Section \ref{sec:ina}. After that, we  formulate the network routing and resource allocation frameworks in Section \ref{section:problem_for}, and propose our solution in Section \ref{section:solution}.   We then present the numerical results in Section \ref{section:numerical} and final conclusion in Section \ref{section:conclusion}.

	\section{System Model} \label{section:sys_model}

	As illustrated in Fig. \ref{fig::mqms},  let's consider a set of $K$  mobile users {MUs}, denoted by
	$\mathcal{K} =
	\{1,\cdots,K\}$ with local datasets $\mathcal{D}_k  = \{ \mathbf{x}_i \in \mathbb{R}^d, y_i \}_{i=1}^{n_k}$ with $n_k$ data points. Let $\mathcal{M} = \{0, \cdots ,M\}$ denote the set of edge nodes (ENs). They  {can be co-located with} small cell base stations which  {have} communications and computing capacities \cite{PoularakisTON2020}. These ENs are connected with a macro base stations, equipped with  a cloud server, denoted as EN $0$. Each user can be associated with one or more ENs. 
	\begin{figure}[t]
		\centering
		{\includegraphics[width=.95\linewidth]{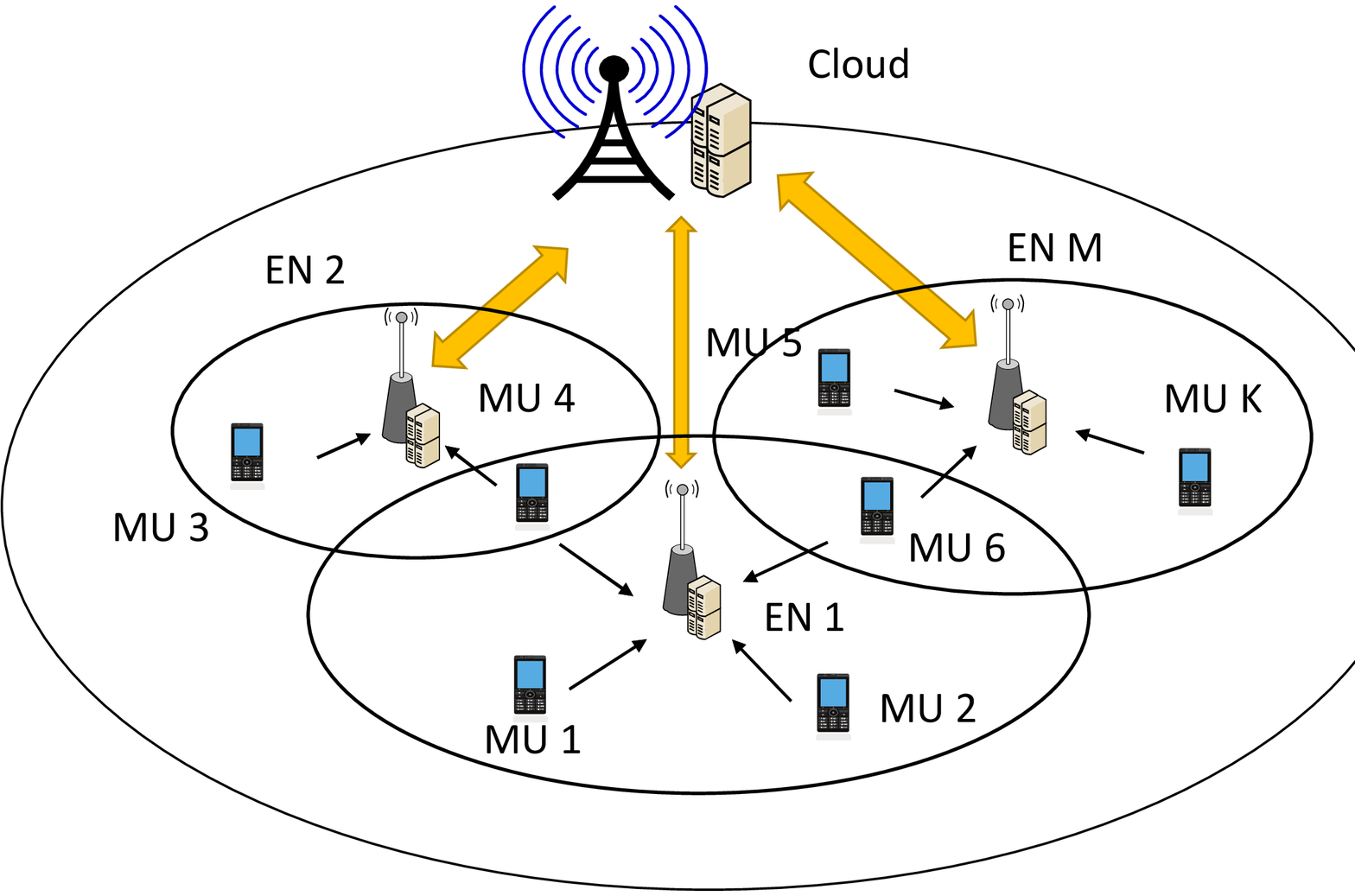}}
		\caption{ FL-enabled Edge Computing Network Architecture. \label{fig::mqms} }
	\end{figure}	
		\begin{figure}[h]
			\centering
			\subfigure[]{\includegraphics[width=0.18\linewidth]{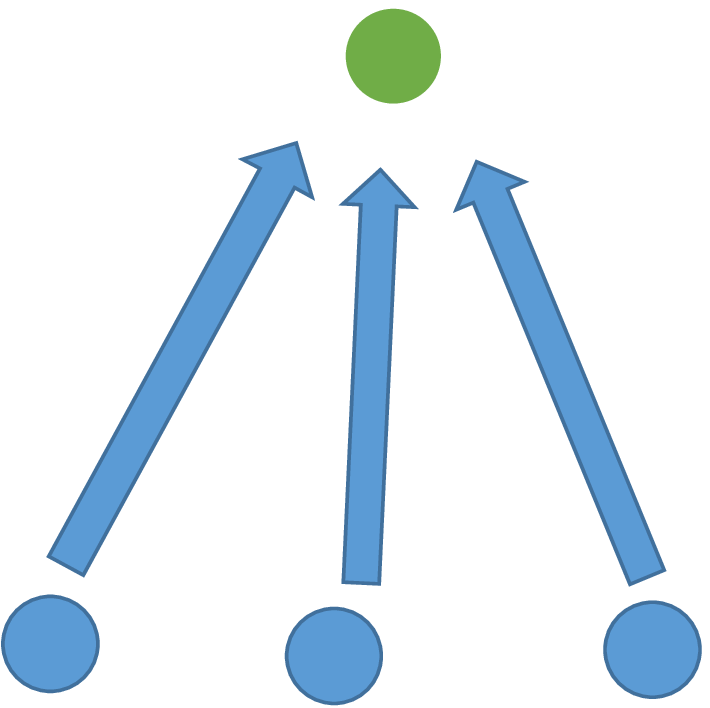}}\hspace{1.5cm}
			\subfigure[]{\includegraphics[width=0.43\linewidth]{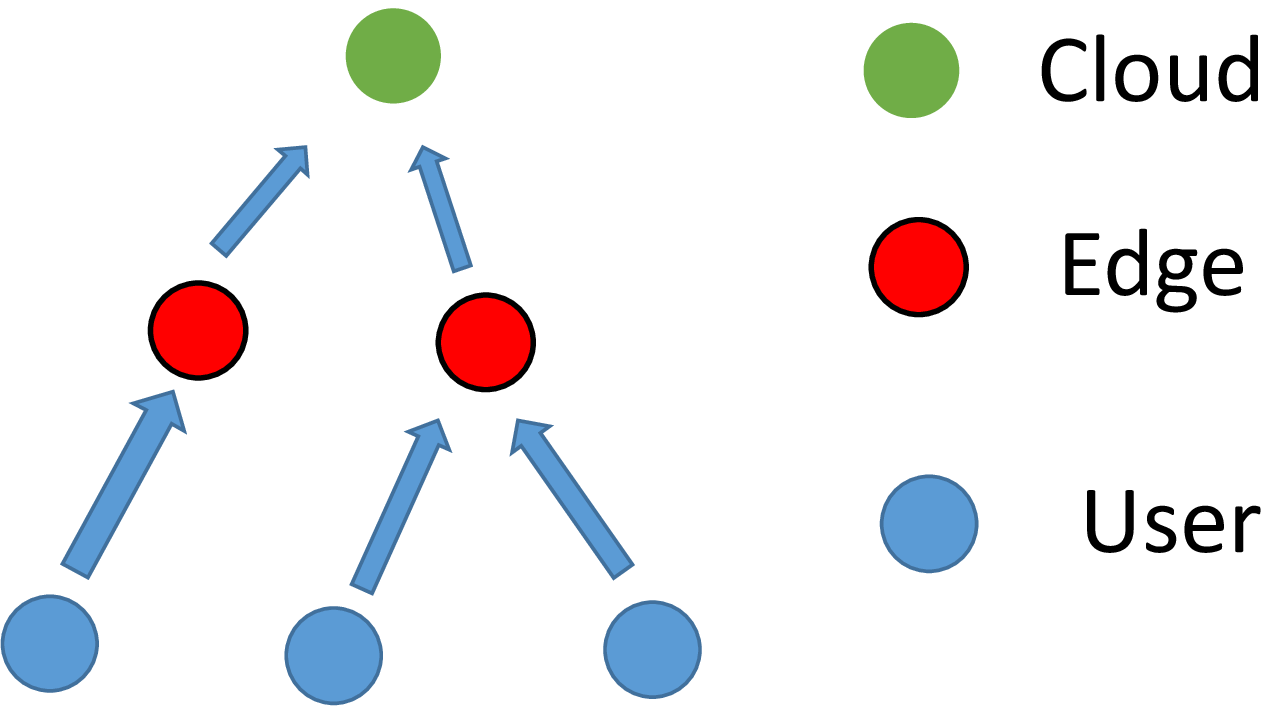}}
			\caption{The logical view of (a) conventional network model and (b) multi-tier edge network model.   \label{fig::logic_net} }
		\end{figure}
		
	\subsection{Federated Learning}
	 To construct the shared global model, the goal is to find the model  {parameters} $\mathbf{w} \in \mathbb{R}^d$ which  {minimize} the following global loss function in a distributed manner:
		\begin{align}
		\min_{\mathbf{w} \in \mathbb{R}^d} \Bigg\{ P(\mathbf{w}) = \frac{1}{n} \sum_{i=1}^{n} l_i (\mathbf{x}_i^T\mathbf{w})  + \xi r(\mathbf{w}) \Bigg\},  \label{eqn:ML_primal_prob}
		\end{align}
		where $n = \sum_{i=1}^{K} n_k $, $\xi$ is the regularizing parameter, $r(\mathbf{w})$ is a deterministic penalty function and $l_i$ is the loss function at data sample $i$ \cite{Ma2017}. Here, we also use notation $\bm{\psi}$ for the global model. 
	
		 {To solve (\ref{eqn:ML_primal_prob}), a Federated Learning framework introduced in \cite{McMahan2017} is performed as following.} At each iteration $t$, the cloud broadcasts the global model $\bm{\psi}^t$ to  {all the} MUs. Based on the latest global model, each MU 
		{learns}  {its} local parameters $\mathbf{w}_k^t$ according  {to the} Stochastic Gradient Descent update
		rule aiming  at minimizing the objective function $P(\mathbf{w})$ by only using local information and the parameter value
		in $\bm{\psi}^t$ \cite{McMahan2017}:
		\begin{align}
		\mathbf{w}_k^t = \bm{\psi}^{t} - \eta (\nabla l_i(\bm{\psi}^{t} 
		) + \nabla r(\bm{\psi}^{t})). \label{eqn:SGD}
		\end{align}
	The resulting local model updates are forwarded to the cloud for computing the new global model as follows:
		\begin{align}
		\bm{\psi}^{t+1}  = 
		\frac{1}{n} \sum_{k=1}^{K} n_k \mathbf{w}_k^t. \label{eqn:global_model}
		\end{align}

	We summarize the procedures of  {the FL framework} as follows:
	\begin{itemize}
				\item[1.] Global Model Broadcasting: The cloud broadcasts the latest global model $\bm{\psi}^{t}$ to the MUs.
		\item[2.] Local Model Updating: Each MU performs local training following (\ref{eqn:SGD}). 
		\item[3.] Global Model Aggregation: Local models are then sent back to the cloud.  The new value of global model is computed following (\ref{eqn:global_model}).
		\item[4.] Steps 1-3 are repeated until convergence.
	\end{itemize}

	\subsection{Communication Model}
	We then introduce the communication model for multi-user
	access.  For each FL iteration, the cloud node will select a set of users $\mathcal{K}^t$ at each iteration $t$ \footnote{The learner selection in FL can be based on the quality or significance of information or location learners \cite{LimCOMST2020}. Here, how to  {select best MUs at each learning round} is out of the scope of this paper.}.  All users consent about their models' structure, such as a specific neural network design. Hence, let $D$ denote the data size of model  {parameters, which is} fixed and identical for all users, 
	where  {$D$ is proportional} to the  {cardinality} of $\mathbf{w}$ \cite{SamekTut2020}. 
	\subsubsection{Global Model Broadcasting}
	Since the downlink communication capacity of the  {cloud node} is much larger than that of an edge node, all users  {will} listen to the  {cloud node} at the model broadcasting step. Let $W^d$ denote the downlink communication capacity of the cloud node.  The latency for broadcasting the global model is $T^{\mathrm{d}} = \frac{D}{W^d}$.

	\subsubsection{Global Model Aggregation}
		Let $a_{km} $ be the aggregation routing variable, where
		\begin{align}
	\nonumber	a_{km}  = \begin{cases}
		1 & \text{if MU $k$'s is associated with EN $m$}, \forall m \in \mathcal{M},\\
		0 & \text{otherwise.} 
		\end{cases}
		\end{align}
		  {Here, we assume that an MU is not  {allowed to transmit data directly to the} marco base station to reduce the uplink traffic overhead. However, MUs can listen to the downlink channel in network broadcast messages.} Let $\mathbf{a}_{m}   =  [a_{1m} , a_{2m} , \cdots, a_{Km}  ]^T$  {denote} the uplink association vector of edge node $m$. We let $\mathbf{A}  = \{a_{km} \}  \in \{0,1 \}^{K\times M}$  denote the uplink  {association matrix}, and $\tilde{\mathbf{a}}  = [\mathbf{a}_{0}^T,\mathbf{a}_{1}^T, \ldots , \mathbf{a}_{M}^T]^T$ denote the column vector corresponding to $\mathbf{A} $. Let $\mathcal{K}_m^t$ denote the set of user associated with edge node $m$,  {then we have} $\bigcup_m ~ \mathcal{K}_m^t = \mathcal{K}^t$, and $|\mathcal{K}^t| = \sum_{m}|\mathcal{K}_m^t|$,
		where $| \cdot |$  denotes the cardinality of a set. 	
		
		Let $r_{km}$ denote the uplink data rate between  {MU $k$ and edge node $m$.}	Let $\mathbf{r}_{m}   =  [r_{1m} , r_{2m} , \ldots, r_{Km}  ]^T$ denote the uplink bandwidth allocation vector corresponding to edge node $m$. 	We denote $\mathbf{R}  = r_{km} \in \mathbb{R}^{K\times M}$ as the uplink bandwidth allocation
		matrix.
	Let $B^{\mathrm{fr}}_m$, and $B^{\mathrm{bk}}_m$ denote the uplink fronthaul and backhaul capacity of edge node $m$.  Then, uplink communication latency between edge node $m$ and  {its associated users} is the longest latency of a given user:
	\begin{align}
	T^{\mathrm{u,\rm{fr}}}_m & = \max_{k \in \mathcal{K}_m} \Bigg\{D\frac{a_{km}}{r_{km}} \Bigg \}, ~\textrm{where}~ r_{km} \leq B^{\mathrm{fr}}_m, \forall m \in \mathcal{M} \setminus \{0\}. 
	\end{align}
	After edge nodes receive local models, each edge node can perform its aggregation computation, then send the aggregated result to the cloud node.  {Alternatively,} edge  nodes just forward received models to the cloud. Let $\gamma_m$ denote the transmission latency between edge node $m$ and the cloud node. Without in-network aggregation functionality, $\gamma_m$ is computed as followed
	\begin{align}
	\gamma_m = \frac{D\sum_{k \in \mathcal{K}_m }  a_{km}}{B^{\mathrm{bk}}_m}. \label{eqn:no_mec_edge_cloud_lat}
	\end{align}
	The uplink  {aggregation} latency of users associated with edge node $m$ is
	\begin{align}
	T^{\mathrm{u}}_m  & = T^{\mathrm{u,\rm{fr}}}_m + \gamma_m.
	\end{align}	 

	\section{In-Network Aggregation Design}	\label{sec:ina}	
	
	 {We now} introduce the in-network computation protocol where edge nodes support the cloud node for averaging users' local models. First, we design the user packet which plays a role of data representation in a in-network computation protocol. Let $\bm{\phi}_k^{t} = \{\phi_k^t[0],\bm{\phi}_k^t[1]  \}$ denote the local message of users $k$ at iteration $t$ such that
	\begin{align}
	\{ \phi_k^t[0],\bm{\phi}_k^t[1]  \} = \{ n_k, 	\mathbf{w}_k^t \}. 
	\end{align}
	\subsection{In-Network Aggregation Process}
	 {First, consider the following} in-network aggregation (INA) process at edge nodes and the cloud node  {that helps} decentralize the aggregation process at the cloud node. Let $\bm{\chi}^t_m$ denote the average local model of edge node $m$ such that
	\begin{align}
		\bm{\chi}^t_m  = \frac{1}{\sum_{k \in \mathcal{K}_m^t} \phi_k^t[0]}\sum_{k \in \mathcal{K}_m^t} \phi_k^t[0] \bm{\phi}_k^t[1]. \label{eqn:edge_model}
		\end{align}	
		Let $\varphi_m^t = \{\varphi_m^t[0],\bm{\varphi}_m^t[1] \}$ denote the message edge node $m$ sends to  {the cloud} node such that
		\begin{align}
		\{ \varphi_m^t[0],\bm{\varphi}_m^t[1]  \} = \Big \{ \sum_{k \in \mathcal{K}_m^t} \phi_k^t[0] , \bm{\chi}^t_m \Big \}.
		\end{align}
	 {To conserve the result of (\ref{eqn:global_model}),} the global model is computed as follows:
	\begin{align}
	\bm{\psi}^{t+1} &= \frac{ \sum_{m}\varphi_m^t[0]\bm{\varphi}_m^t[1]}{ \sum_{m}\varphi_m^t[0]} .  \label{eqn:agg_proc:final} 
	\end{align}
 \begin{theorem}
	The edge network architecture as well as the INA process reduce the traffic and computing overhead at the cloud node by a factor of $K/M$ in comparison with conventional star network topologies.
\end{theorem}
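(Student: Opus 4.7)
The plan is to compare, on a per-round basis, the traffic volume and the number of aggregation operations incurred at the cloud node under two setups: the conventional star topology where every selected MU uploads directly to the cloud, and the proposed edge architecture where each EN runs the INA process and forwards only a single aggregated message $\varphi_m^t$ per round. Since the statement is a ratio comparison on cloud-side resources, the proof is essentially a bookkeeping argument, and the work lies in identifying the correct units and verifying that no additional cloud-side work is required to preserve equation (\ref{eqn:global_model}).

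First I would handle the traffic overhead. Under the star topology, the cloud receives one packet of size $D$ from each of the $K$ selected users, so the uplink ingress at the cloud is $K D$ per round. Under the proposed architecture, by construction of the INA process, each EN $m$ computes $\bm{\chi}_m^t$ locally and transmits exactly one message $\varphi_m^t = \{\varphi_m^t[0], \bm{\varphi}_m^t[1]\}$ whose payload size is $D$ up to an $O(1)$ scalar header $\varphi_m^t[0]$ which is negligible compared to $D$ (recall $D$ scales with the cardinality of $\mathbf{w}$). Hence the cloud ingress is $M D$ per round, giving a ratio $KD/(MD) = K/M$.

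Next I would handle the computing overhead. In the star topology, evaluating (\ref{eqn:global_model}) at the cloud requires $K$ weighted vector additions of $d$-dimensional local models, so the cost is $\Theta(K d)$ per round. Under the INA process, the weighted sums over $\mathcal{K}_m^t$ in (\ref{eqn:edge_model}) are performed at the ENs, and the cloud only executes (\ref{eqn:agg_proc:final}), which is a weighted sum of $M$ vectors, at cost $\Theta(M d)$. The key identity to verify here is that (\ref{eqn:agg_proc:final}) reproduces (\ref{eqn:global_model}) exactly; this follows by substituting $\varphi_m^t[0] = \sum_{k\in\mathcal{K}_m^t} n_k$ and $\bm{\varphi}_m^t[1] = \bm{\chi}_m^t$ and using $\bigcup_m \mathcal{K}_m^t = \mathcal{K}^t$ with disjoint union, so the cloud's output is unchanged. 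Taking the ratio again gives $K/M$.

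The main obstacle, such as it is, lies in the disjointness assumption needed in the last step: if a user were associated with multiple ENs simultaneously, its contribution could be double-counted in (\ref{eqn:agg_proc:final}). I would therefore add a short remark noting that correctness of the INA process implicitly requires $\{\mathcal{K}_m^t\}_{m\in\mathcal{M}}$ to partition $\mathcal{K}^t$ (equivalently, $\sum_m a_{km} = 1$ for each selected $k$), which is consistent with the association model introduced in Section \ref{section:sys_model}. With that caveat, both the traffic and computing comparisons reduce to the same $K/M$ factor, yielding the claim.
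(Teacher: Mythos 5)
The paper itself omits the proof of this theorem, so there is nothing to compare against line by line, but your bookkeeping argument---$KD$ versus $MD$ ingress traffic at the cloud, $\Theta(Kd)$ versus $\Theta(Md)$ aggregation work, together with the check that (\ref{eqn:agg_proc:final}) reproduces (\ref{eqn:global_model}) when the $\mathcal{K}_m^t$ partition $\mathcal{K}^t$---is sound and is evidently the intended argument. The disjointness caveat you flag is precisely what constraint (\ref{eqn:constraint:assignment}) enforces, so your proof is complete as stated (with the ratio being exactly $K/M$ when every edge node serves at least one user, and even larger otherwise).
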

		\begin{proof}
			The proof is omitted here for brevity. 
		\end{proof}
	\subsection{Revised Latency Model}
	With the proposed INA process, let $\gamma_m'$ denote the transmission latency between an edge node $m$ and the cloud node. If there is no user associate with an edge node $m$, i.e., $\sum_{k}  a_{km} = 0$, $\gamma_m'$ is zero. Otherwise, since edge node $m$ only needs to send  {its} aggregated model,  {computed in (\ref{eqn:edge_model}),}  {to} the cloud node, $\gamma_m'$ is computed as follows
	\begin{align}
		\gamma_m' = \min \Bigg \{\frac{D}{B^{\mathrm{bk}}_m}, \frac{D\sum_{k \in \mathcal{K}_m }  a_{km}}{B^{\mathrm{bk}}_m} \Bigg \}.
	\end{align}

	\section{Network Routing and Resource Allocation Framework for FL} \label{section:problem_for}
	In this section,  we  aim to minimize the  total uplink aggregation latency by jointly  optimizing (a)  which edge node a user should send its local  {model} directly to and (b) the optimal data  rates for wireless connections between the users and the edge nodes. The total uplink aggregation latency is  computed as follows
	\begin{align}
	T^{\mathrm{u}}(\mathbf{A},\mathbf{R}) & =  \max_m \Bigg\{  T^{\mathrm{u,\rm{fr}}}_m  + \gamma_m' \Bigg\}  .
	\end{align}
	The aggregation latency-minimized routing framework is formulated as  followed
	\begin{subequations}
			\begin{align}
			\nonumber \mathscr P_1 : & \min_{\mathbf{A},\mathbf{R}} T^{\mathrm{u}}(\mathbf{A},\mathbf{R}),  \\
			\rm{s.t.}~	& \sum_{m=0}^{M}  a_{km} = 1, \forall k \in \mathcal{K}^t, \label{eqn:constraint:assignment}\\
			& \sum_{k \in \mathcal{K}_m} r_{km} \leq B^{\mathrm{fr}}_m, \forall m \in \mathcal{M} \setminus \{0\}, \label{eqn:constraint:edge_bw}\\ 
			& a_{km} \in \{0,1\},\label{eqn:constraint:assign_variable}\\
			& r_{km} \in[0,B^{\mathrm{fr}}_m] , \forall m \in \mathcal{M} \setminus \{0\}. \label{eqn:constraint:bw_variable}
			\end{align}
	\end{subequations}
	The constraints (\ref{eqn:constraint:assignment}) guarantee that a user can associate with only one edge node in one iteration. The constraints (\ref{eqn:constraint:edge_bw}) ensure that total users' data rates associated with each edge node must
	not exceed its bandwidth capacity. $\mathscr P_1 $ is  a  mixed-integer  nonlinear  programming, which  is  NP-hard.\footnote{ {The proof is omitted here for brevity.}}  We will propose a  {highly efficient} randomized rounding solution for practical implementation in the next section.

	\section{Randomized Rounding Based Solution} \label{section:solution}
	
	In this section, we present an approximation algorithm for the main
	problem that leverages a randomized rounding technique  {\cite{motwani1996randomized}}.  Firstly, $\mathscr P_1 $ is  transformed  to  an equivalent integer linear program (ILP). Then,  by  relaxing the integer constraints, $\mathscr P_1 $ becomes a linear programming which can be solved by  linear  solvers.  {We first observe that}:
	
		\begin{lemma}\label{lemma:2}
			Given any uplink  {association matrix} $\mathbf{A}$, with $ |\mathcal{K}_m| = \sum_{k \in \mathcal{K}_m}  a_{km} >0$, for problem $\mathscr P_1 $, at each edge node $m$, the uplink latency for users associated with edge node $m$ satisfies
			\begin{align}
			T^{\mathrm{u}}_m  & = \max_{k \in \mathcal{K}_m} \Bigg\{D\frac{a_{km}}{r_{km}} \Bigg \} + \min \Bigg \{\frac{D}{B^{\mathrm{bk}}_m}, \frac{D\sum_{k \in \mathcal{K}_m }  a_{km}}{B^{\mathrm{bk}}_m} \Bigg \} \nonumber \\
			&  \geq \frac{D |\mathcal{K}_m|}{B^{\mathrm{fr}}_m}  + \min \Bigg \{\frac{D}{B^{\mathrm{bk}}_m}, \frac{D|\mathcal{K}_m|}{B^{\mathrm{bk}}_m} \Bigg \}.
			\end{align}
			The equality happens when $r_{1m} = \ldots = r_{|\mathcal{K}_m|m} = \frac{B^{\mathrm{fr}}_m}{|\mathcal{K}_m|} $.
		\end{lemma}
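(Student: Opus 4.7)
The plan is to exploit the observation that, in the expression for $T^{\mathrm{u}}_m$, the decision variable $\mathbf{R}$ influences only the first term $\max_{k \in \mathcal{K}_m}\{D a_{km}/r_{km}\}$; the second term $\min\{D/B^{\mathrm{bk}}_m, D\sum_{k \in \mathcal{K}_m} a_{km}/B^{\mathrm{bk}}_m\}$ depends solely on the association matrix $\mathbf{A}$, and since $a_{km}=1$ for every $k \in \mathcal{K}_m$ by definition, this term already coincides with the corresponding quantity on the right-hand side of the claimed inequality. Hence the lemma reduces to establishing the rate-only bound $\max_{k \in \mathcal{K}_m}\{D/r_{km}\} \geq D|\mathcal{K}_m|/B^{\mathrm{fr}}_m$ under the fronthaul capacity constraint $\sum_{k \in \mathcal{K}_m} r_{km} \leq B^{\mathrm{fr}}_m$.

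I would prove this reduced inequality via a standard averaging (pigeonhole) argument: among the $|\mathcal{K}_m|$ nonnegative numbers $\{r_{km}\}_{k \in \mathcal{K}_m}$ whose sum is at most $B^{\mathrm{fr}}_m$, at least one must be no larger than $B^{\mathrm{fr}}_m/|\mathcal{K}_m|$. Taking reciprocals and multiplying by $D$ gives $\max_{k \in \mathcal{K}_m}\{D/r_{km}\} \geq D|\mathcal{K}_m|/B^{\mathrm{fr}}_m$, and re-adding the unchanged backhaul term yields the stated overall bound on $T^{\mathrm{u}}_m$.

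For the equality case, I would verify directly that the uniform allocation $r_{km} = B^{\mathrm{fr}}_m/|\mathcal{K}_m|$ is feasible, since it exhausts the fronthaul budget exactly, and collapses every ratio $D/r_{km}$ to the common value $D|\mathcal{K}_m|/B^{\mathrm{fr}}_m$, so the max matches the lower bound. Any non-uniform feasible allocation must contain some $r_{km}$ strictly below the average $B^{\mathrm{fr}}_m/|\mathcal{K}_m|$, which strictly inflates the slowest user's upload time and forces the inequality to be strict; this confirms that uniform rate splitting is the unique minimiser and justifies the equality condition stated in the lemma. The only subtlety worth flagging is cleanly articulating the decoupling of the two latency components in $\mathbf{R}$; beyond that, the argument is a routine minimax computation with no substantive obstacle, which is consistent with the lemma's role as an intermediate step used in Section \ref{section:solution} to recast $\mathscr P_1$ as an ILP over $\mathbf{A}$ alone.
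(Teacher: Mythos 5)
Your proposal is correct: the backhaul term depends only on $\mathbf{A}$ (and equals the right-hand side term since $a_{km}=1$ for $k\in\mathcal{K}_m$), and the averaging/pigeonhole argument under $\sum_{k\in\mathcal{K}_m} r_{km}\leq B^{\mathrm{fr}}_m$, together with the check that the uniform allocation $r_{km}=B^{\mathrm{fr}}_m/|\mathcal{K}_m|$ attains the bound, is exactly the routine minimax computation this lemma calls for. The paper omits its proof for brevity, so no literal comparison is possible, but your route is the standard one implicit in the paper's use of the lemma (reducing $\mathscr P_1$ to an optimization over $\mathbf{A}$ alone with fair rate splitting), and your extra observation that the uniform full-budget allocation is the unique minimiser is a harmless strengthening.
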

		\begin{proof}
			The proof is omitted here for brevity. 
		\end{proof}	
		Following Lemma \ref{lemma:2}, the network operator hence only needs to  optimize the uplink  {association matrix} while the uplink data rates for users associated with edge nodes will be allocated  {in a} fairness manner. If $|\mathcal{K}_m| = 0$, we arbitrarily set the  {values} of $\mathbf{r}_{m}$ and $T^{\mathrm{u}}_m$  {to be} $0$. As a result, $\mathscr P_1 $ is reduced to
		\begin{align}
		\nonumber \mathscr P_2 : & \min_{\mathbf{A}}  \max_m \Bigg\{ D\frac{ \sum_{k \in \mathcal{K}_m}  a_{km}}{B^{\mathrm{fr}}_m} \\
	\nonumber	&\hspace{1.5cm}  + \min \Bigg \{\frac{D}{B^{\mathrm{bk}}_m}, \frac{D\sum_{k \in \mathcal{K}_m }  a_{km}}{B^{\mathrm{bk}}_m} \Bigg\} \Bigg\},\\
		\rm{s.t.}~	& (\ref{eqn:constraint:assignment})~ \textrm{and} ~(\ref{eqn:constraint:assign_variable}),
		\end{align}
		 {where the the optimal solution in $\mathscr P_1$ can be computed from optimal solution in $\mathscr P_2 $ as the following lemma:}

\begin{lemma}
	 {Let $\mathbf{A}^{**}$ denote the optimal solution in $\mathscr P_2 $. Following Lemma \ref{lemma:2}, the optimal solution in $\mathscr P_1 $, $\{\mathbf{A}^*,\mathbf{R}^*\}$ is computed as follows}
	\begin{align}
	\nonumber	& \mathbf{A}^*  = \mathbf{A}^{**}, \\
		& r_{1m}^* = \ldots = r_{|\mathcal{K}_m|m}^* = \frac{B^{\mathrm{fr}}_m}{|\mathcal{K}_m|}.
	\end{align}
\end{lemma}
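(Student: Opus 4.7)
The plan is to reduce $\mathscr{P}_1$ to $\mathscr{P}_2$ by showing that, for any fixed association matrix $\mathbf{A}$, the uniform rate allocation $r_{km} = B^{\mathrm{fr}}_m/|\mathcal{K}_m|$ is optimal at every edge node simultaneously. Once this decoupling is established, the two problems share the same objective value and the same optimal $\mathbf{A}$.

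First, I would observe that the bandwidth constraint (\ref{eqn:constraint:edge_bw}) is separable across edge nodes: the rates $\mathbf{r}_m$ at edge node $m$ interact only with users in $\mathcal{K}_m$ and with $B^{\mathrm{fr}}_m$, never with $\mathbf{r}_{m'}$ for $m' \neq m$. Therefore, for any fixed $\mathbf{A}$, one may minimize each $T^{\mathrm{u}}_m$ independently over $\mathbf{r}_m$ without any coupling. Applying Lemma \ref{lemma:2} to every $m$ with $|\mathcal{K}_m| > 0$, the pointwise minimum of $T^{\mathrm{u}}_m$ equals $\frac{D|\mathcal{K}_m|}{B^{\mathrm{fr}}_m} + \min\!\bigl\{\frac{D}{B^{\mathrm{bk}}_m}, \frac{D|\mathcal{K}_m|}{B^{\mathrm{bk}}_m}\bigr\}$ and is attained by the uniform allocation. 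For edge nodes with $|\mathcal{K}_m| = 0$, the convention $T^{\mathrm{u}}_m = 0$ fixes the corresponding contribution, consistent with $\mathscr{P}_2$.

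Next, since $T^{\mathrm{u}}(\mathbf{A},\mathbf{R}) = \max_m T^{\mathrm{u}}_m$ is a monotone (max) aggregation of the per-node latencies, pointwise minimization of each $T^{\mathrm{u}}_m$ also minimizes the maximum. Consequently, for every feasible $\mathbf{A}$,
\begin{equation*}
\min_{\mathbf{R}} T^{\mathrm{u}}(\mathbf{A},\mathbf{R}) \;=\; \max_m \Bigl\{ \tfrac{D \sum_{k} a_{km}}{B^{\mathrm{fr}}_m} + \min\!\bigl\{\tfrac{D}{B^{\mathrm{bk}}_m}, \tfrac{D \sum_{k} a_{km}}{B^{\mathrm{bk}}_m}\bigr\} \Bigr\},
\end{equation*}
which is precisely the objective of $\mathscr{P}_2$ at $\mathbf{A}$. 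Taking $\min_{\mathbf{A}}$ on both sides yields that the optimal values of $\mathscr{P}_1$ and $\mathscr{P}_2$ coincide, and any $\mathbf{A}^{**}$ optimal for $\mathscr{P}_2$ paired with the uniform allocation is optimal for $\mathscr{P}_1$. I would finish by noting the converse direction: given any optimizer $(\mathbf{A}^*,\mathbf{R}^*)$ of $\mathscr{P}_1$, replacing $\mathbf{R}^*$ by the uniform allocation cannot increase the objective, so $\mathbf{A}^*$ is also optimal for $\mathscr{P}_2$.

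The only subtlety, and the main place where care is needed, is the treatment of edge nodes with empty associations and the interaction between the two branches of the $\min\{\cdot,\cdot\}$ in $\gamma'_m$; both are handled uniformly by the convention above and by the fact that $\gamma'_m$ is a function of $|\mathcal{K}_m|$ alone (not of $\mathbf{r}_m$), so the decoupling argument applies without modification.
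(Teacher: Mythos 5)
Your proposal is correct and follows exactly the route the paper intends (the paper omits the proof but explicitly bases the lemma on Lemma~\ref{lemma:2}): for fixed $\mathbf{A}$ the rates decouple across edge nodes, Lemma~\ref{lemma:2} gives the per-node optimum at the uniform allocation, and the max-aggregation plus the converse direction establishes equivalence of $\mathscr P_1$ and $\mathscr P_2$. No gaps; your handling of the empty-association case and of $\gamma'_m$ being independent of $\mathbf{r}_m$ is the right level of care.
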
	
			\begin{proof}
				The proof is omitted here for brevity. 
			\end{proof}	
	
	The proposed algorithm is described in detail
	below and summarized in Algorithm 1. First, we introduce auxiliary variables $y$ and $\bm{\gamma}= \{\gamma_1,\ldots,\gamma_M \}$ into $\mathscr P_2 $ such that
	\begin{align}
	y & \geq  \max_m \Bigg\{D \frac{ \sum_{k \in \mathcal{K}_m}  a_{km}}{B^{\mathrm{fr}}_m} + \gamma_m \Bigg\},\\
	\gamma_m & \leq \min \Bigg \{\frac{D}{B^{\mathrm{bk}}_m}, \frac{D\sum_{k \in \mathcal{K}_m }  a_{km}}{B^{\mathrm{bk}}_m} \Bigg \}, \forall m \in \mathcal{M} \setminus \{0\},
	\end{align}
	Problem $\mathscr P_2 $ is  {then equivalently} transformed to
	\begin{subequations}
	\begin{align}
	\nonumber \mathscr P_3 : & \min_{y,\bm{\gamma},\mathbf{A}} y, \\
	\mathrm{s.t.}~& y \geq D\frac{ \sum_{k \in \mathcal{K}_m}  a_{km}}{B^{\mathrm{fr}}_m} + \gamma_m, \forall m \in \mathcal{M} \setminus \{0\}, \label{eqn:aux_cons:inc:edge_lat}\\
	& \gamma_m \leq \frac{D}{B^{\mathrm{bk}}_m}, \forall m \in \mathcal{M} \setminus \{0\}, \label{eqn:aux_cons:inc:ina} \\
	& \gamma_m \leq \frac{D\sum_{k \in \mathcal{K}_m }  a_{km}}{B^{\mathrm{bk}}_m}\label{eqn:aux_cons:inc:non_ina}, \forall m \in \mathcal{M} \setminus \{0\}, \\
	& (\ref{eqn:constraint:assignment})~ \textrm{and} ~(\ref{eqn:constraint:assign_variable}). \nonumber
	\end{align}
	\end{subequations}
	The Algorithm \ref{algorithm:heu_per_plan} starts by solving the Linear Relaxation (LR) of $\mathscr P_3 $. Specifically, it relaxes the variables
	$a_{km}$ to be fractional, rather than integer. The
	Linear Relaxation of $\mathscr P_3 $ can be expressed as follows:
			\begin{align}
			\nonumber \mathscr P_4 : & \min_{y,\bm{\gamma},\mathbf{A}} y,  \\
			\mathrm{s.t.}~& (\ref{eqn:aux_cons:inc:edge_lat})-(\ref{eqn:aux_cons:inc:non_ina}), (\ref{eqn:constraint:assignment}),~\textrm{and}~ a_{km} \in [0,1].
			\end{align}
	
	Let $z^{\dagger}= [{\tilde{\mathbf{a}}}^{\dagger},y^{\dagger}, \gamma_1^{\dagger}, \ldots,\gamma_M^{\dagger}]$ denote the optimal solution of $\mathscr P_4 $.
	First,  {vector ${\tilde{\mathbf{a}}}^{\dagger}$ should be transformed}  to  {an} equivalent fractional matrix ${\mathbf{A}}^{\dagger}$,
	whose elements are in $[0, 1]$ by a ``reshape'' operation. The
	term ``reshape'' means to change the size of a vector or a matrix
	while its number of elements is unchanged.  {${\mathbf{A}}^{\dagger}$} is the optimal solution to $\mathscr P_2 $, if all components
	of ${\mathbf{A}}^{\dagger}$ are binary. Otherwise,
	to  {obtain binary matrix ${\mathbf{A}}^{(\rm{Alg})}$}, for each row of ${\mathbf{A}}^{\dagger}$,
	we perform a randomization by setting the element $a_{km}$ to 1 with probability ${a_{km}}^{\dagger}$. The decision is done in an exclusive manner for satisfying constraints (\ref{eqn:constraint:assignment}). It means that for each row $k$, only one element of the row is one, the rest  {are zeros}. The random decision is made independently for all $k$. By doing
	this procedure,  {the matrix ${\mathbf{A}}^{(\rm{Alg})}$ is achieved}. Then, ${{r_{1m}} }^{(\rm{Alg})} = \ldots = {r_{|\mathcal{K}_m|m} }^{(\rm{Alg})} = \frac{B^{\mathrm{fr}}_m}{|\mathcal{K}_m|}, \forall m$. The complexity of this algorithm is $O(\nu^{3.5}\Omega^2)$, where $\nu = KM+ M+1$.
	
	\begin{algorithm}    [t]                % enter the algorithm environment
		\caption{Randomized Routing Algorithm for Low Latency Federated Learning}          % give the algorithm a caption
		\label{algorithm:heu_per_plan}                           % and a label for \ref{} commands later in the document
		{\footnotesize		
			\begin{algorithmic}[1]                    % enter the algorithmic environment
				\Require $D$, $B^{\mathrm{fr}}_m$, $B^{\mathrm{bk}}_m$, and $\mathcal{K}^t$.
				\Ensure ${\mathbf{A} }^{(\rm{Alg})}$, ${\mathbf{R} }^{(\rm{Alg})}$
				\State Solve $\mathscr P_4 $  to achieve ${\mathbf{A}}^{\dagger}$.
				\If  {${\mathbf{A}}^{\dagger}$ is binary}
				\State ${\mathbf{A} }^{(\rm{Alg})} = {\mathbf{A}}^{\dagger}$
				\Else
				\For {$k = 1$ to $k= K$}
				\State {${a_{km}}^{(\rm{Alg})} = 1$ with probability ${a_{km}}^{\dagger}$} with exclusive manner based on constraints (\ref{eqn:constraint:assignment})
				\EndFor
				\EndIf						
				
				\State {Then,
					\begin{align}
					\nonumber 	&	{{r_{1m}} }^{(\rm{Alg})} = \ldots = {r_{|\mathcal{K}_m|m} }^{(\rm{Alg})} = \frac{B^{\mathrm{fr}}_m}{|\mathcal{K}_m|}, \forall m ~\textrm{such that}~|\mathcal{K}_m| \neq 0.
					\end{align}
				}
			\end{algorithmic}}
		\end{algorithm}
		\begin{theorem}
			The aggregation latency returned by Algorithm \ref{algorithm:heu_per_plan} is at most $\frac{2 \ln K}{y^{\dagger}} + 3$ times higher  {than that of} the optimal with
			high probability,  {where $K$ is the number of MUs and $y^{\dagger}$ is the lower bound of the aggregation latency which can be obtained in polynomial time.}
		\end{theorem}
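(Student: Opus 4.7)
The overall plan is a textbook analysis of LP-based randomized rounding: prove an absolute bound $T^{\mathrm u}(\mathbf A^{(\mathrm{Alg})},\mathbf R^{(\mathrm{Alg})}) \leq 3\,y^{\dagger} + 2\ln K$ with high probability, and then divide by the optimal integer value $T^{\mathrm{u,OPT}}\ge y^{\dagger}$ (the latter holds because $\mathscr P_4$ is a relaxation of $\mathscr P_3$, which is equivalent to $\mathscr P_2$, which by the preceding lemma shares the optimal assignment with $\mathscr P_1$) to convert the absolute bound into the claimed multiplicative ratio $3 + 2\ln K/y^{\dagger}$.

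To carry this out I fix an edge node $m$ and observe that, after the rounding step, the realised load $X_m := \sum_{k} a_{km}^{(\mathrm{Alg})}$ is a sum of $K$ independent Bernoulli random variables (independence across $k$ is exactly what the ``for $k=1$ to $K$'' loop in Algorithm \ref{algorithm:heu_per_plan} guarantees), with mean $\alpha_m := \sum_k a_{km}^{\dagger}$. Constraints (\ref{eqn:aux_cons:inc:edge_lat})--(\ref{eqn:aux_cons:inc:non_ina}) evaluated at the LP optimum yield $D\alpha_m / B_m^{\mathrm{fr}} + \gamma_m^{\dagger} \leq y^{\dagger}$ and $\gamma_m^{\dagger}\leq \min\{D/B_m^{\mathrm{bk}}, D\alpha_m/B_m^{\mathrm{bk}}\}$. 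Combined with the elementary inequality $\min\{a,cb\}\leq c\min\{a,b\}$ valid for $c\geq 1$, these allow any bound of the form $X_m\leq c\alpha_m$ to be translated directly into $T^{\mathrm u}_m \leq c\,y^{\dagger}$, where I use the revised latency $T^{\mathrm u}_m = D X_m/B_m^{\mathrm{fr}} + \min\{D/B_m^{\mathrm{bk}},D X_m/B_m^{\mathrm{bk}}\}$.

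The engine of the proof is then the multiplicative Chernoff bound $\Pr[X_m\ge(1+\delta)\alpha_m]\le\exp(-\min\{\delta,\delta^2\}\alpha_m/3)$, augmented by the Poisson-style tail $\Pr[X_m\ge t]\le(e\alpha_m/t)^{t}$ in the low-mean regime. I split on the size of $\alpha_m$: when $\alpha_m=\Omega(\ln K)$ the choice $\delta_m=2$ already gives $X_m\le 3\alpha_m$ with failure probability at most $1/K^{2}$, producing the factor~$3$; when $\alpha_m$ is small one instead pushes to $X_m \le 3\alpha_m + 2\ln K\cdot B_m^{\mathrm{fr}}/D$, and carrying this extra $2\ln K\cdot B_m^{\mathrm{fr}}/D$ through the latency expression contributes an additive $2\ln K$ that becomes the $2\ln K/y^{\dagger}$ correction when re-expressed as a ratio to $y^{\dagger}$. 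A union bound over the $M\leq K$ edge nodes closes the argument with probability at least $1-1/K$.

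The main obstacle will be precisely this two-regime case analysis: the piecewise-linear $\min$ in $\gamma_m'$ means the per-node latency is only asymptotically proportional to $X_m$, so the constant~$3$ and the additive $2\ln K$ emerge from two different Chernoff regimes that must be stitched together carefully, with the same tail probability calibration, so that one high-probability statement holds uniformly over all $m$ and both the multiplicative and additive pieces of the bound are recovered simultaneously.
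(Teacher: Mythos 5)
The paper itself omits the proof of this theorem, so the only meaningful comparison is to the standard Raghavan--Thompson-style randomized-rounding analysis that the theorem is evidently built on, and your overall template matches it: use $y^{\dagger}\le T^{\mathrm{u,OPT}}$ (which you justify correctly, since any integer-feasible assignment extends to a feasible point of $\mathscr P_4$ with the same objective value), prove an absolute bound of the form $3y^{\dagger}+2\ln K$ per edge node via Chernoff on the independent Bernoulli loads $X_m$, union-bound over $m$, and divide by $y^{\dagger}$ to get the ratio $3+2\ln K/y^{\dagger}$. That skeleton is right. However, two of your concrete steps do not hold as stated.

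First, the charging step ``$X_m\le c\,\alpha_m \Rightarrow T^{\mathrm u}_m\le c\,y^{\dagger}$'' needs $y^{\dagger}\ \ge\ D\alpha_m/B^{\mathrm{fr}}_m+\min\{D/B^{\mathrm{bk}}_m,\,D\alpha_m/B^{\mathrm{bk}}_m\}$, i.e.\ it needs $\gamma_m^{\dagger}\ \ge\ \min\{D/B^{\mathrm{bk}}_m,\,D\alpha_m/B^{\mathrm{bk}}_m\}$. But the constraints you invoke, (\ref{eqn:aux_cons:inc:ina})--(\ref{eqn:aux_cons:inc:non_ina}), bound $\gamma_m^{\dagger}$ from \emph{above}, and since $\mathscr P_4$ minimizes $y$ subject to (\ref{eqn:aux_cons:inc:edge_lat}), the optimizer pushes $\gamma_m^{\dagger}$ \emph{down}, not up to the cap; at an LP optimum $\gamma_m^{\dagger}$ can be far below the min (with the natural implicit $\gamma_m\ge 0$, it is simply $0$), so $y^{\dagger}$ need not contain any backhaul contribution and your per-node translation fails. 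You must either argue separately about the structure of the LP optimum, or absorb the rounded backhaul term through its cap $\min\{D/B^{\mathrm{bk}}_m,\cdot\}\le D/B^{\mathrm{bk}}_m$ at the cost of an extra additive term the stated bound does not show. Second, in the low-mean regime the additive slack you postulate, $X_m\le 3\alpha_m+2\ln K\cdot B^{\mathrm{fr}}_m/D$, is calibrated backwards: Chernoff/Bernstein gives concentration only with slack $\Omega(\ln K)$ measured in \emph{model counts}, which translates to an additive latency $\Omega(\ln K\cdot D/B^{\mathrm{fr}}_m)$ seconds, not $2\ln K$ seconds; when $B^{\mathrm{fr}}_m/D$ is small your slack is a fraction of one model and the failure probability is of order $\alpha_m$, nowhere near $K^{-2}$. (This mirrors the dimensional looseness of the theorem statement itself, but as a proof step it requires an explicit normalization such as $D/B^{\mathrm{fr}}_m\le 1$, which you neither state nor justify.) Until these two points are repaired, the claimed high-probability bound $3y^{\dagger}+2\ln K$ is not established.
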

		\begin{proof}
			The proof is omitted here for brevity. 
		\end{proof}

		\section{Numerical Results} \label{section:numerical}
				\begin{figure}[t]
					\centering
					{\includegraphics[width=1\linewidth]{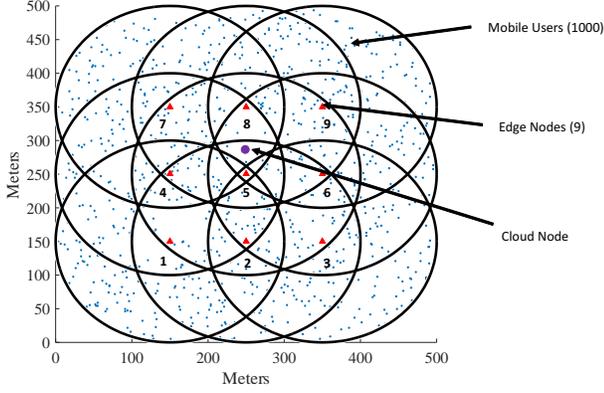}}
					\caption{The network setup. \label{fig::network_setup} }
				\end{figure}
		In this section, {simulations}  {are conducted} to show the
		performance of the proposed algorithm. We consider a
		similar setup as in \cite{PoularakisTON2020}, depicted in  {Fig.} \ref{fig::network_setup}.
		Here, $M = 9$ edge nodes are regularly deployed  {in} a
		grid network inside a $500 \times 500 ~\rm{m}^2$ area. $K = 1000$ mobile
		users are distributed uniformly at random over the EN coverage
		regions (each of $150$m radius). In our simulation, without loss of generality, all $K$ users' models are aggregated in one learning iteration. The cloud node's coverage contains all mobile users. For each edge node $m$,
		we set the the uplink
		fronthaul capacity to $B^{\mathrm{fr}}_m = 1$Gbps, the backhaul capacity to $B^{\mathrm{bk}}_m = 1$Gbps. These settings are inspired by Wifi IEEE 802.11ac standards \cite{Wifiax}, and data centers interconnection using optical fibers \cite{Chengthesis2019}. We also set the cloud downlink capacities $W^d = 2$Gbps. These values  {may be changed} during the evaluations. For model aggregation, by default, we investigate our system using ResNet152's model size, i.e., $D = 232$ MB \cite{KerasPretrain}. In later simulations, we also investigate our system with different model sizes.

		\subsection{Algorithm Comparison - Latency Reduction}
				\begin{figure}[t]
					\centering
					{\includegraphics[width=0.9\linewidth]{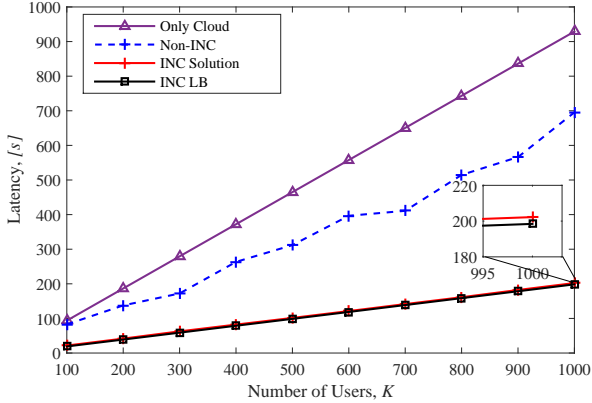}}
					\caption{Algorithm comparison with respect to different number of users. \label{fig::alg_com_change_K} }
				\end{figure}
		Fig. \ref{fig::alg_com_change_K} compares the aggregation latency of different algorithms
		versus the number of users $K$ in one learning iteration. The proposed INC protocol is
		compared with three other baseline methods, namely:
		\begin{itemize}
			\item[1.] \textit{Only Cloud}: $K$ users send their models to the cloud node via its hypothetical uplink wireless channel with $W^u = 2$Gbps.
			\item[2.] \textit{INC Solution}: $K$ users can associate with the cloud node and edge nodes with INC protocol.  The network routing problem $\mathscr P_3 $ is solved by using Algorithm \ref{algorithm:heu_per_plan}.
						\item[3.] \textit{Non-INC}: In this scenario, without the proposed INA process, $K$ users are associated with their nearest edge nodes regardless of their capacities. The latency between edge node and cloud node is computed by following (\ref{eqn:no_mec_edge_cloud_lat}).
			\item[4.] \textit{INC LB}: In this scenario, we use Linear
			Relaxation to solve $\mathscr P_3 $. This scenario will provide the lower bound of network latency if the proposed INC protocol is considered.
		\end{itemize}
		As can be  {observed in} Fig. \ref{fig::alg_com_change_K}, our proposed algorithm can achieve near optimal performance. When $K = 1000$, 
		 {the latency obtained by the proposed solution is approximately $1.9\%$ higher than that of the \textit{INC LB}}. It implies that  our proposed solution can achieve the performance almost the same as that of the lower bound solution. \textit{Only Cloud} has the worst performance. For example, when $K = 1000$, the aggregation latency of \textit{Only Cloud} is $928.9$s which is  $133\%$ higher than that of the second worst one, \textit{Non-INC}, $695.3$s. We also observe that when $K = 1000$, \textit{Only Cloud} and \textit{Non-INC} are $459\%$ and $343\%$ higher than  {that of our} proposed solution, i.e., \textit{INC Solution}. Last but not least, the  {gaps} between our proposed algorithm and   \textit{Only Cloud} and \textit{Non-INC} enlarge as the number of users $K$ increases.  {This clearly shows} that our  {proposed solution} is  {significantly beneficial} for very large scale federated learning networks.

		\begin{figure}[t]
			\centering
			{\includegraphics[width=0.9\linewidth]{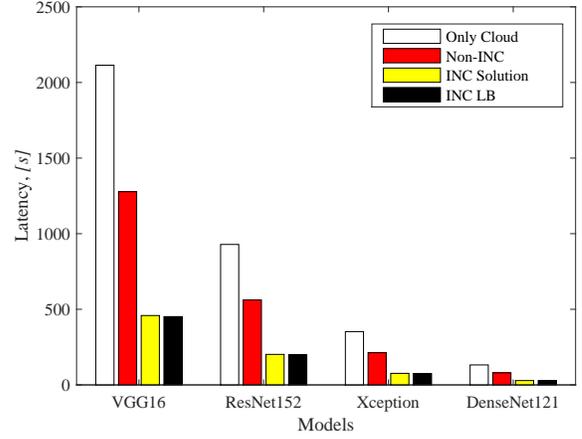}}
			\caption{Algorithm comparison with respect to different models. \label{fig::alg_com_change_D} }
		\end{figure}

		 {In Fig. \ref{fig::alg_com_change_D}, consider \textit{Only Cloud}, \textit{Non-INC} and  our \textit{INC solution}, we evaluate the aggregation latency in different models in one learning iteration. They are VGG16, ResNet152, Xception and DenseNet121 whose model sizes are $528$ MB, $232$ MB, $88$ MB and $33$ MB, respectively \cite{KerasPretrain}. Here, we choose the default setting  {with} $K=1000$. We observe that with different models, the aggregation latency of the proposed solution, \textit{INC solution}, is significantly lower  {than those of the} \textit{Only Cloud} and \textit{Non-INC}. For example, with VGG16, the aggregation latency of \textit{INC solution} is $4.6$ times and $2.7$ times lower than  {those} of \textit{Only Cloud} and \textit{Non-INC}, respectively. 
		
		\subsection{Traffic and Computation Reduction at the Cloud Node}
		
		\begin{figure}[t]
			\centering
			{\includegraphics[width=0.85\linewidth]{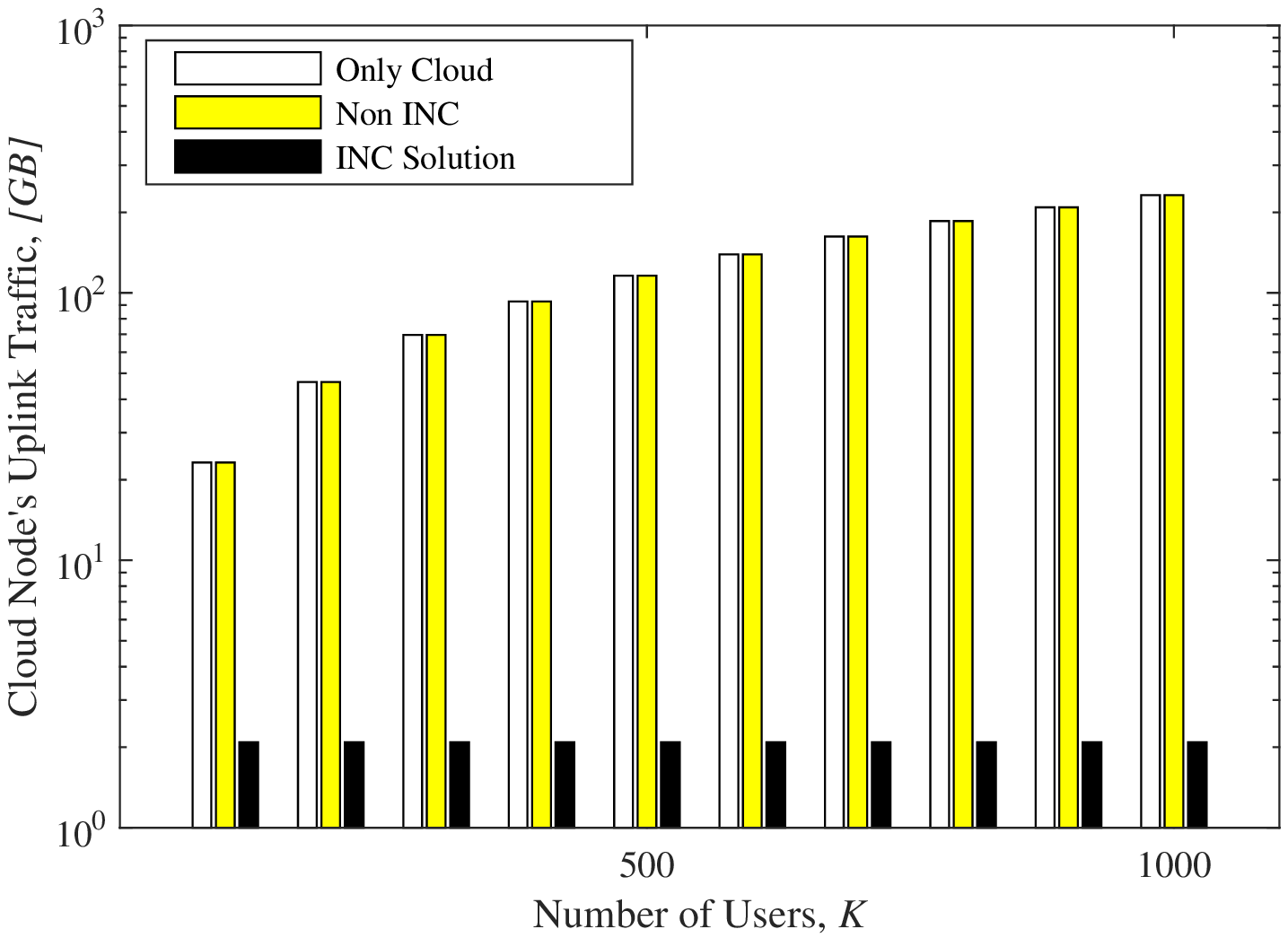}}
			\caption{Cloud node's uplink traffic and computing load. \label{fig::network_load} }
		\end{figure}
		
				\begin{figure}[t]
					\centering
					{\includegraphics[width=0.85\linewidth]{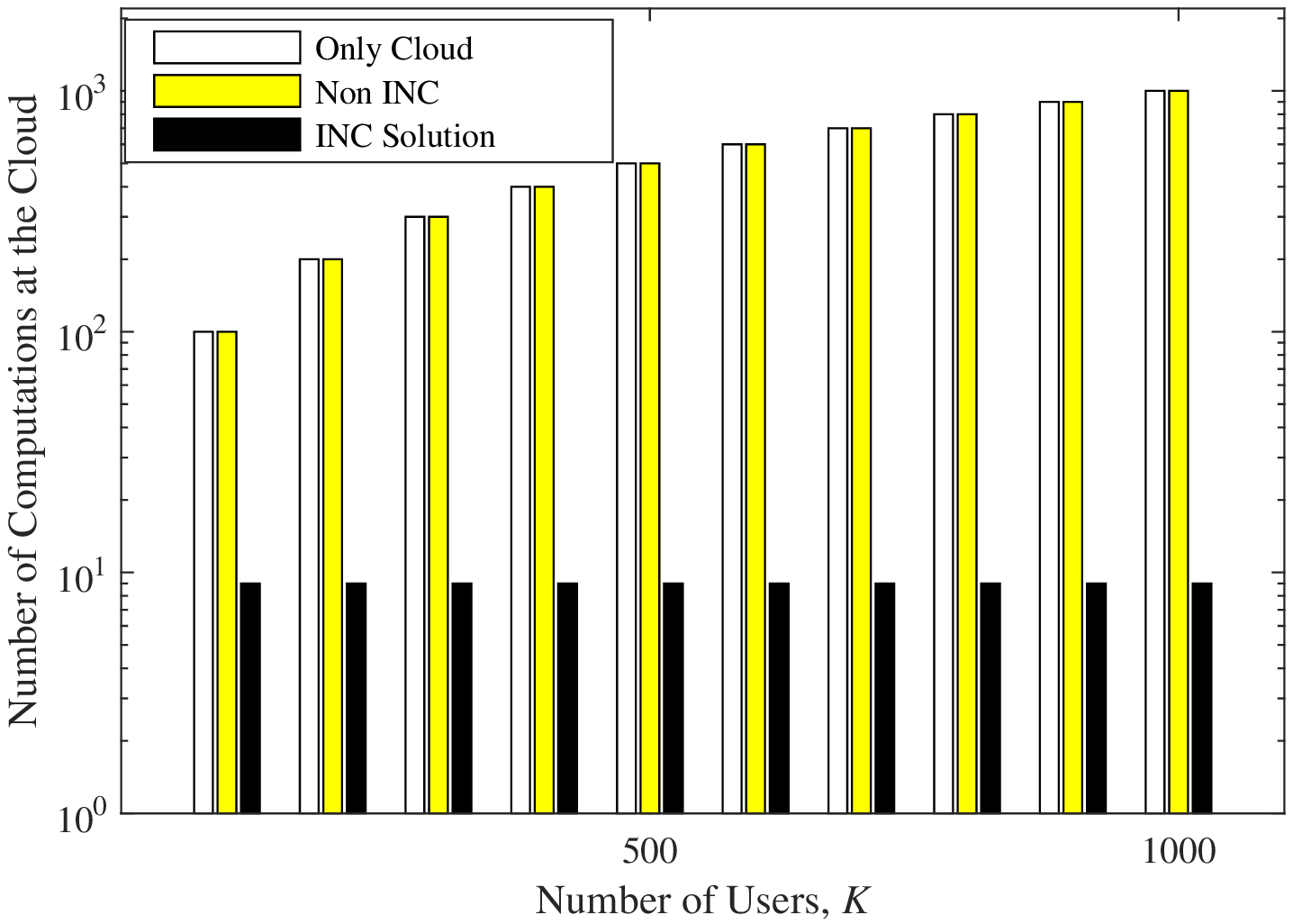}}
					\caption{Cloud node's uplink traffic and computing load. \label{fig::computing_load} }
				\end{figure}
		In this part, using ResNet152's model setting, we investigate the uplink traffic and the number of models needed to be aggregated at the cloud node in one learning iteration. We compare three schemes: \textit{Only Cloud}, \textit{Non-INC} and  our \textit{INC solution}. The number of models needed to be aggregated at the cloud node is proportional to the number of computations here. In Fig. \ref{fig::network_load} and Fig. \ref{fig::computing_load}, the uplink traffic and the number of computations of \textit{Non-INC} at the cloud node  {are} equal to those of \textit{Only Cloud}. It is because all models need to be sent to the cloud before  {being} aggregated and edge nodes only  {forward} the models from users to the cloud without the proposed INA process. Meanwhile, with \textit{INC solution}, the two metrics are significantly reduced by remaining unchanged at low values. The reason is that the cloud only collects aggregated models from edge nodes which are fixed. For example, when $K = 1000$, the traffic is $2.32$GB for our scheme and $232$GB for the other two.  As a result, our scheme can keep the traffic and computing load at the cloud very low even with a large number of users.

		\section{Conclusion} \label{section:conclusion}
		In this paper, we propose a  {novel} edge network architecture aiming at minimizing the aggregation latency of  {FL processes}. This architecture is able to   decentralize the model aggregation process of cloud node to edge nodes.  To achieve that network functionality, we design an in-network computation protocol  consisting of an in-network aggregation process and a network routing algorithm. The in-network aggregation  {process is to enhance learning processes through leveraging computations at the edges and cloud.} We also formulate  a joint routing and resource allocation optimization problem to minimize the network's aggregation latency.  {As} the optimization problem is NP-Hard, we propose  {a highly-effective solution} based on randomized rounding with provable performance guarantee.  Our simulation results
		 {show} that the proposed algorithm  {can achieve} near
		optimal network latency and outperform some  {other baseline} schemes such as \textit{Only Cloud}, \textit{Non-INC}. We also show that  {the INC} protocol  {can help} the cloud node significantly decrease not only its network's aggregation latency but also its traffic load and computing load.
		
		\section{Acknowledgment}
		This work was supported in part by the Joint Technology and Innovation Research Centre, a partnership between University of Technology Sydney and Ho Chi Minh City University of Technology (HCMUT) - VNU HCM.
		\appendices

		% use section* for acknowledgment

		% Can use something like this to put references on a page
		% by themselves when using endfloat and the captionsoff option.
		\ifCLASSOPTIONcaptionsoff
		\newpage
		\fi

		\bibliographystyle{IEEEtran}
		\bibliography{IEEEabrv,references}

		% that's all folks
	\end{document}